\documentclass[journal]{IEEEtran}

\usepackage{float}
\usepackage[colorlinks,linkcolor=blue]{hyperref}
\usepackage[latin1]{inputenc}
\usepackage{amsmath}
\usepackage{graphicx}
\usepackage{amssymb}
\usepackage{amsthm}
\usepackage{verbatim}
\usepackage{epsfig}
\usepackage[labelfont=bf]{caption}
\usepackage{subcaption}
\usepackage{enumerate}

\usepackage{booktabs}
\usepackage{stfloats}
\usepackage{amssymb}
\usepackage{amsthm}
\usepackage{graphicx}
\usepackage{float}
\usepackage{amsmath}
\usepackage{amsmath,bm}
\usepackage{color}
\usepackage{multirow}
\usepackage{mathrsfs}
\usepackage{lineno,hyperref}
\usepackage{lipsum}
\usepackage{stfloats}

\usepackage{dsfont}
\usepackage{mathtools}

\newtheorem{definition}{Definition}
\newtheorem{assumption}{Assumption}
\newtheorem{lemma}{Lemma}
\newtheorem{remark}{Remark}

\newtheorem{proposition}{Proposition}

\def\begquo{\begin{quote}}
	\def\endquo{\end{quote}}
\def\begequarr{\begin{eqnarray}}
\def\endequarr{\end{eqnarray}}
\def\begequarrs{\begin{eqnarray*}}
	\def\endequarrs{\end{eqnarray*}}
\def\begarr{\begin{array}}
	\def\endarr{\end{array}}
\def\begequ{\begin{equation}}
\def\endequ{\end{equation}}
\def\lab{\label}
\def\begdes{\begin{description}}
	\def\enddes{\end{description}}
\def\begenu{\begin{enumerate}}
	\def\begite{\begin{itemize}}
		\def\endite{\end{itemize}}
	\def\endenu{\end{enumerate}}

\def\lef[{\left[\begin{array}}
	\def\rig]{\end{array}\right]}

\def\begcen{\begin{center}}
	\def\endcen{\end{center}}
\def\begrem{\begin{remark}\rm}
	\def\endrem{\end{remark}}
\def\begdef{\begin{definition}}
	\def\enddef{\end{definition}}
\def\begpro{\begin{propositionosition}}
	\def\endpro{\end{propositionosition}}
\def\begfac{\begin{fact}}
	\def\endfac{\end{fact}}
\def\begass{\begin{assumptionption}}
	\def\endass{\end{assumptionption}}


\def\begmat#1{\begin{bmatrix}#1\end{bmatrix}}
\def\begali#1{\begin{align}{#1}\end{align}}
\def\begalis#1{\begin{align*}{#1}\end{align*}}




\def\liminf{\lim_{t \to \infty}}

\def\L2e{{\cal L}_{2e}}

\def\rea{\mathbb{R}}

\def\adj{\mbox{adj}}
\def\col{\mbox{col}}



\def\TAC{{\it IEEE Transactions Automatic Control}}

\def\TPS{{\it IEEE Transactions on Power Systems}}

\usepackage{color}


\usepackage[prependcaption,colorinlistoftodos]{todonotes}

\title{PMU-Based Decentralized Mixed Algebraic and Dynamic State Observation in Multi-Machine Power Systems}
\author{M. Nicolai L. Lorenz-Meyer, Alexey A. Bobtsov, Romeo Ortega, Nikolay Nikolaev, and Johannes Schiffer
	\thanks{M. N. L. Lorenz-Meyer and J. Schiffer are with Brandenburg University of Technology Cottbus-Senftenberg, 03046 Cottbus, Germany (e-mail: lorenz-meyer@b-tu.de).}
	\thanks{A. A. Bobtsov and N. Nikolaev are with ITMO University, 197101, Saint Petersburg, Russian Federation.}
	\thanks{R. Ortega is with Departamento Acad\'{e}mico de Sistemas Digitales, ITAM, Ciudad de M\'exico, M\'{e}xico.}
}

\begin{document}
	\maketitle

\begin{abstract}
We propose a novel decentralized mixed algebraic and dynamic state observation method for multi-machine power systems with unknown inputs and equipped with Phasor Measurement Units (PMUs). More specifically, we prove that for the third-order flux-decay model of a synchronous generator, the local PMU measurements give enough information to reconstruct {\em algebraically} the load angle and the quadrature-axis internal voltage. Due to the algebraic structure a high numerical efficiency is achieved, which makes the method applicable to large scale  power systems. 
Also, we prove that the relative shaft speed can be globally estimated combining a classical Immersion and Invariance (I\&I) observer with---the recently introduced---dynamic regressor and mixing (DREM) parameter estimator. This adaptive observer ensures global convergence under weak excitation assumptions that are verified in applications. The proposed method does not require the measurement of exogenous inputs signals such as the field voltage and the mechanical torque nor the knowledge of mechanical subsystem parameters. 
\end{abstract}

\section{Introduction}

\subsection{Motivation and Existing Literature} 
Situational awareness of the transient dynamics of power systems is becoming increasingly important as these systems undergo major changes due to the massive introduction of power-electronics-interfaced components on the generation side, while at the same time novel demand-response technologies and more complex loads are implemented on the customer side \cite{ZHAetal, winter15}. Furthermore, the systems operate more frequently at high loading and, thus, closer to the stability limits \cite{winter15, ulbig14}. Additionally, cascaded failures and disconnections can occur due to improper protection and control methods in such stressed conditions~\cite{winter15, ulbig14}. 
It is, hence, highly relevant for power system operators to reliably monitor the system states in real time, especially during and after disturbances \cite{ZHAetal, dehghani14}.

These developments render the steady-state assumptions used in traditional static state estimation questionable. Thus, the deployment of methods for dynamic state estimation (DSE) is gaining increasing importance for power system control and protection. This is further emphasized, as the dynamics of power systems become faster and more volatile due to the abovementioned changes~\cite{milano18}. A main application of DSE and, hence, the proposed method in the present paper lies within the field of dynamic security assessment (DSA). For a reliable DSA, computationally efficient and robust DSE techniques are essential, not only to accurately monitor the system states but also to precisely initialize transient simulations \cite{ZHAetal}.

A key enabler for DSE is the increasing availability of phasor measurement units (PMUs) \cite{ZHAetal}. Their ability to provide time-stamped, high-frequency measurements of voltages, currents and powers \cite{matavalam19,anderson14} gives rise to modern dynamic observer designs.

As of today, the prevalent algorithms in the literature for DSE are Kalman Filter-based techniques. This includes the Extended Kalman Filter (EKF), which is used in \cite{paul18} to estimate the states of a synchronous generator as well as the exciter field voltage and the output mechanical torque of the governor using PMU measurements from the bus closest to the generator. To avoid the linearization and Jacobian matrix calculation required for the EKF, several derivative free state estimation methods employing Unscented Kalman Filters (UKF) have been proposed, see \cite{wang11,valverde11}. The UKF is based on the application of the unscented transformation combined with the Kalman filter theory. The authors of \cite{anagnostou17} extended the application of this method to the case of a synchronous generator with unknown inputs.  
In \cite{Emami2015, Cui2015}, purely probability-based Particle Filters (PF) are utilized. In this way, smooth and accurate state estimation results may even be achieved in the  presence of unknown changes in the covariance of the measurement noise. For a thorough review of various Kalman Filter-based approaches to DSE, the reader is referred to \cite{ZHAetal, SINPAL}. Yet, in all the abovementioned publications, the observer convergence is only shown empirically, {\em i.e.} via simulations, but no rigorous convergence guarantees are provided. 

The problem of state observation of nonlinear systems has been extensively studied in the control systems literature, providing answers to many important theoretical questions and proposing effective observer designs with analytical convergence guarantees. We refer the reader to \cite{ASTetal,BER} for a review of the literature. 
Based on these advancements, recently efforts have been made to develop novel DSE methods based on state observer design concepts for nonlinear systems with provable convergence guarantees for the estimation errors. 
In \cite{Nugrohoetal}, a robust observer providing performance guarantees for the state estimation error norm against worst case disturbances is proposed based on the Lipschitz property of the employed synchronous generator model and the PMU measurements. However, the paper focuses only on the estimation of the generator states and considers the exogenous input signals, {\em i.e.} the field voltage and the mechanical torque, known. Using a linearized representation for the power system, in \cite{Taha18} a risk mitigation strategy to eliminate threats from the power system's unknown inputs and cyber-attacks is proposed utilizing a sliding mode observer. A nonlinear observer and a cubature KF are developed and compared to classical KF-based DSE methods in \cite{Qietal}. Through numerical simulations, the observer and the cubature KF are shown to be more robust to model uncertainties and cyber attacks against PMU measurements. The authors in \cite{Anagnostou18} introduce an anomaly detection scheme to spot inconsistent power system operational changes based on a state observer, with guaranteed estimation error convergence. However, the mechanical and electrical subsystem parameters are considered known in these proposals. In practical applications, these parameters are typically only known from baseline testing, but can significantly deviate from their nominal values due to aging, component replacements and initial measurement errors \cite{marchi2020}.  

\subsection{Contributions}
Motivated by the abovementioned recent developments, we propose a novel, unknown-input DSE method suitable for multi-machine power systems, in which the synchronous generators are represented by the three-dimensional flux-decay model \cite{kundur94,MACetal,SAUetal, Van_cutsem}. In particular, the measurement of exogenous input signals such as the field voltage and the mechanical torque, which are difficult to access for power system operators, is not required. Furthermore, in contrast to the previously mentioned methods, the only assumed prior knowledge on the generator parameters is the transient admittance magnitude. The mechanical and other electrical subsystem parameters are estimated online along with the system states. Additionally, the load angle and the quadrature-axis internal voltage are reconstructed algebraically and, thus, instantaneously and numerically highly efficient without the need of solving a differential equation. Thereby, this algebraic observer does not require initial conditions. 

By exploiting the huge potential provided by PMUs, we provide a numerically efficient method suitable for large-scale systems, while providing rigorous analytical convergence guarantees for the employed state estimation as well as the parameter identification technique. In this setting, our main contributions are three-fold: 
\begenu
\item We propose a  decentralized {\em algebraic} observer for the load angle and the quadrature-axis internal voltage using only the measurements provided via PMUs at the terminal buses. In order to establish this result, we assume that the direct-axis transient reactance and the quadrature-axis reactance are equal.
\item A certainty equivalent adaptive observer for the relative shaft speed is designed by combining the  classical Immersion and Invariance (I\&I) technique \cite{ASTetal} with the recently introduced Dynamic Regressor and Mixing (DREM) procedure \cite{ARAetal}, which has been successfully applied to several engineering problems \cite{bobtsov15,bobtsov17,SCHARIORT}. This observer is decentralized, does not require any additional prior knowledge on the system and its convergence is ensured via very weak excitation assumptions. 
\item The effectiveness of the proposed DSE technique is demonstrated in extensive simulations based on the New England IEEE-39 bus system \cite{hiskens13}. In this way, the suitability of the technique for PMU measurement with Gaussian and non-Gaussian noise is shown. The applicability to PMU measurements with non-Gaussian noise is especially relevant, as PMU measurement noise has been found to follow a non-Gaussian distribution \cite{Wang18} and the inability to deal with these kind of disturbances is a common shortcoming in KF-based DSE approaches.
\endenu

The remainder of the paper is structured as follows. The mathematical model of the multi-machine power system is introduced in Section~\ref{sec1}. An algebraic observer for the load angle and the quadrature-axis internal voltage is derived in Section~\ref{sec2}. In Section~\ref{sec3}, a DREM-based I\&I  adaptive observer for the relative shaft speed is presented. The proposed method is validated in Section~\ref{sec4} using simulation results.  Final remarks and a brief outlook on future work are given in Section~\ref{sec5}.

\section{Mathematical Model of a Decentralized Multi-Machine Power System}
\label{sec1}
The multi-machine power system is comprised of $N>1$ synchronous generators, each described by the the well-known third-order flux-decay model \cite[Eq. (7.176-7.178)]{SAUetal}, see also \cite[Eq. (11.108)]{MACetal}, \cite[Eq. (3.3), (3.5), (3.15)]{Van_cutsem}. The employed model is in accordance with the Model 1.0 specified in Table 1 on page 17 of the IEEE Guide for Synchronous Generator Modeling Practices and Parameter Verification with Applications in Power System Stability Analyses, Std 1110-2019 \cite{IEEE2020}. 
Thus, the equations for the $i$-th machine read 
\begin{subequations}
	\lab{model}
	\begali{
		\dot x_{1,i}&=\omega_i - \omega_{t,i} = x_{2,i}- \omega_{t,i} + \omega_s ,\\
		\dot x_{2,i}&=\frac{\omega_s}{2H_i}(T_{m,i}-T_{e,i}-D_ix_{2,i}),\\
		\dot x_{3,i} &= \frac{1}{T_{d0,i}'}(-x_{3,i}-(x_{d,i}-x_{d,i}')I_{td,i}+E_{f,i}),
	}
\end{subequations}
where the {\em unknown} state is defined as
\begalis{
	x_i&:=\begmat{ x_{1,i} & x_{2,i} & x_{3,i} \end{bmatrix}^\top=\begin{bmatrix} \delta_i-\theta_{t,i} & \omega_i-\omega_s & E_{q,i}' }^\top,
}
with $\delta_i$ the rotor angle, $\omega_i$ the shaft speed, $\omega_{t,i}$ the terminal voltage speed, $\omega_s$ the nominal synchronous speed, $E_{q,i}'$ the quadrature-axis internal voltage, $T_{e,i}$ the electrical air-gap torque, $E_{f,i}$ the field voltage  and $V_{t,i}$ the terminal voltage magnitude. Furthermore, the constants, which are all assumed {\em unknown}, are the damping factor $D_i$, the inertia constant $H_i$, the mechanical power $T_{m,i}$, the direct-axis transient open-circuit time constant $T_{d0,i}'$, the direct-axis reactance $x_{d,i}$ and the direct-axis transient reactance $x_{d,i}'$. 
Moreover, $\theta_{t,i}$ is the terminal voltage phase angle relative to the global $DQ$-coordinate system rotating at nominal synchronous speed $\omega_s$ and defined as $$\theta_{t,i}=  \theta_{0,i}+\int^t_0  (\omega_{t,i}- \omega_s) d\tau ,$$
where $\theta_{0,i}$ is the initial condition of $\theta_{t,i}$. Hence, the load angle $x_1$ is constant in a synchronized state.

\begrem
The rotor angle $\delta_i$ is the angular position of the rotor with respect to the local synchronously rotating reference frame, while the quadrature-axis internal voltage leads the terminal voltage by the load angle $x_{1,i} = \delta_i-\theta_{t,i}$ \cite{kundur94}. To clarify the difference between the introduced angles and angular speeds, the local $dq$-coordinate system of the $i$-th machine is shown in Figure \ref{fig:angles} in relation to the global $DQ$-coordinate system. 
\endrem
\begin{figure}
	\centering
	\includegraphics[width=0.5\linewidth]{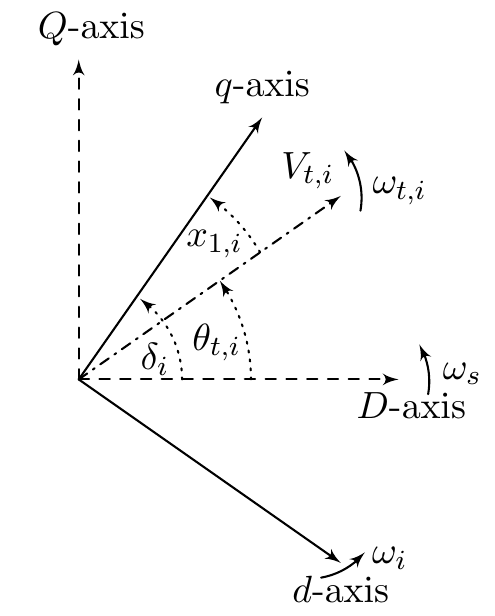}
	\caption{Local $dq$-coordinate system of the $i$-th machine in relation to the global $DQ$-coordinate system.
	}
	\label{fig:angles}
\end{figure}

For the subsequent derivations, we assume the stator resistance to be zero \cite{GHAKAM,GHAKAM2} and make the following assumption on the direct-axis transient reactance $x_{d,i}'$ and the quadrature-axis reactance $x_{q,i}.$
\begin{assumption}
	$x_{d,i}'=x_{q,i}.$ 
	\label{assX}
\end{assumption}
With Assumption~\ref{assX}, the stator equations \cite[Eq. (7.1780-7.181)]{SAUetal} for the $i$-th machine, are given by
\begin{equation}
	\lab{I}
	I_{tq,i}=Y_iV_{t,i}\sin(x_{1,i}),\quad 
	I_{td,i}=Y_i(x_{3,i}-V_{t,i}\cos(x_{1,i})),
\end{equation}
where we have introduced the transient admittance  magnitude $$Y_i =\frac{1}{x_{d,i}'}.$$
With zero stator resistance, the electrical air-gap torque can be approximated by the terminal electrical power (see also \cite{GHAKAM,GHAKAM2})
$$
T_{e,i}\approx P_{t,i}=E_{q,i}'I_{tq,i}=Y_ix_{3,i}V_{t,i}\sin(x_{1,i}),
$$
where we have used \eqref{I} to obtain the last equality. 

Hence, by defining the {\em unknown} constants
\begin{equation}
	\begin{split}
		a_{1,i}&=\frac{\omega_s D_i}{2H_i}, \ \ a_{2,i}= \frac{\omega_s}{2H_i}, \\ a_{3,i} &=\frac{x_{d,i}}{x_{d,i}'T_{d0,i}'}, \ \ a_{4,i}=\frac{x_{d,i}-x_{d,i}'}{x_{d,i}'T_{d0,i}'},
	\end{split}
\end{equation}	
the model \eqref{model} for the $i$-th machine can be compactly written as 
\begin{subequations}
	\lab{x}
	\begali{
		\lab{x1}
		\dot x_{1,i}&=x_{2,i}- \omega_{t,i} + \omega_s,\\
		\lab{x2}	
		\dot x_{2,i}&=-a_{1,i} x_{2,i} + a_{2,i}(T_{m,i} - Y_i V_{t,i} x_{3,i} \sin(x_{1,i})),\\
		\label{x3}
		\dot x_{3,i} &= -a_{3,i} x_{3,i}+ a_{4,i} V_{t,i} \cos(x_{1,i})+ \frac{E_{f,i}}{T_{d0,i}'}.
	}
\end{subequations}
The {\em measurements}, which are provided via a PMU at the generator terminal, are defined as 
\begin{subequations}
	\lab{y}
	\begali{
		y_{1,i}&=V_{t,i}, \label{y1}\\
		y_{2,i}&=P_{t,i}=Y_i V_{t,i} x_{3,i} \sin(x_{1,i}), \label{y2}\\
		y_{3,i}&= Q_{t,i}=\Im\{(V_{td,i}+jV_{tq,i})(I_{d,i}-jI_{q,i})\}\notag \\
		&=Y_i\left( V_{t,i}x_{3,i}\cos(x_{1,i}) -  V_{t,i}^2\right),\label{y3}  \\
		y_{4,i}^2 &= I_{t,i}^2 =I_{td,i}^2+I_{tq,i}^2\notag \\ 
		&=Y_i^2(x^2_{3,i}+V_{t,i}^2-2V_{t,i}x_{3,i} \cos(x_{1,i})),  \label{y4}\\
		y_{5,i} &= f_{t,i} = \frac{\omega_{t,i}}{2 \pi}, \label{y5}
	}
\end{subequations}
where $y_{1,i}>0$ is the terminal voltage, $y_{2,i}$ is the terminal active power, $y_{3,i}$ is the terminal reactive power, $y_{4,i}$ is the terminal current and $y_{5,i}$ the terminal voltage frequency. The imaginary part is denoted by $\Im\{\cdot\}$.

In this note, we first prove that it is possible to algebraically reconstruct the states $x_{1,i}$ and $x_{3,i}$ for the $i$-th machine in a decentralized setting from the terminal measurements 
$$
\bm{y_i}=\begin{bmatrix} y_{1,i} & y_{2,i} & y_{3,i} & y_{4,i} & y_{5,i} \end{bmatrix}^\top.
$$
Then, an adaptive observer for the relative shaft speed $x_{2,i}$ is designed.

\section{An Algebraic Observer for $x_{1,i}$ and $x_{3,i}$}
\label{sec2}
%
As shown in the proposition below, some straightforward algebraic operations on the measured signals $\bm{y_i}$ in \eqref{y} allows us to explicitly compute the unmeasurable states $x_{1,i}$ and $x_{3,i}$, requiring only the knowledge of the transient admittance  magnitude $Y_i$. By construction, this algebraic observer does not require initial conditions, since no differential equation is solved. 

\begin{proposition}\em
	\lab{pro1}
	The states $x_{1,i}$ and $x_{3,i}$ can be determined uniquely from the PMU measurements \eqref{y} via
	\begin{subequations}
		\lab{obs}
		\begin{align}
			\lab{obsx3}
			x_{3,i} &= \sqrt{\frac{y_{4,i}^2 + 2Y_iy_{3,i}}{Y_i^2} + y_{1,i}^2}, \\
			x_{1,i} &= \arcsin\left(\frac{y_{2,i}}{Y_iy_{1,i}x_{3,i}}\right).
			\lab{obsx1}
		\end{align}
	\end{subequations}

\end{proposition}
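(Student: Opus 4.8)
The plan is to solve the algebraic system \eqref{y2}--\eqref{y4} directly for $x_{1,i}$ and $x_{3,i}$, treating $y_{1,i},\dots,y_{4,i}$ as known quantities and $Y_i$ as the known transient admittance magnitude. The structural obstacle is the bilinear product $V_{t,i}x_{3,i}\cos(x_{1,i})$, which couples the two unknowns and appears in both \eqref{y3} and \eqref{y4}. The idea is that a suitably weighted combination of these two equations should annihilate this term and leave an equation in $x_{3,i}$ alone.

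First I would form the combination $y_{4,i}^2 + 2Y_iy_{3,i}$. Substituting the right-hand sides of \eqref{y3} and \eqref{y4}, the coefficient of $V_{t,i}x_{3,i}\cos(x_{1,i})$ is $-2Y_i^2$ from the current term and $+2Y_i^2$ from the reactive-power term, so these cancel exactly. Using $y_{1,i}=V_{t,i}$ from \eqref{y1}, the remaining $V_{t,i}^2$ contributions collapse, and one is left with $y_{4,i}^2 + 2Y_iy_{3,i} = Y_i^2\!\left(x_{3,i}^2 - y_{1,i}^2\right)$. Dividing by $Y_i^2$ and adding $y_{1,i}^2$ isolates $x_{3,i}^2$, which gives \eqref{obsx3} upon taking the square root. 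Since $x_{3,i}=E_{q,i}'$ is a voltage magnitude, the positive root is the physically admissible one, so $x_{3,i}$ is determined uniquely.

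With $x_{3,i}$ recovered, \eqref{y2} is a single equation in the remaining unknown: dividing by $Y_iy_{1,i}x_{3,i}$ yields $\sin(x_{1,i}) = y_{2,i}/(Y_iy_{1,i}x_{3,i})$, which is well posed because $y_{1,i}>0$ by definition and $x_{3,i}>0$ by the previous step. Inverting through the principal branch of the arcsine then produces \eqref{obsx1}.

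The calculation itself is elementary; the step I expect to require the most care is the uniqueness/branch selection, i.e.\ arguing that the formulas \eqref{obs} return the \emph{correct} physical values rather than spurious solutions. Two points must be addressed: the sign ambiguity of the square root in \eqref{obsx3}, which I would resolve by invoking the positivity of the internal voltage $E_{q,i}'$, and the single-valuedness of the sine inversion in \eqref{obsx1}, which holds only if the load angle is confined to $(-\pi/2,\pi/2)$. I would remark that this interval is precisely the stable operating range for the load angle of a synchronous machine, so the arcsine recovers $x_{1,i}$ unambiguously in all regimes of practical relevance.
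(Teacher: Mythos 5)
Your proposal is correct and follows essentially the same route as the paper: combine \eqref{y3} and \eqref{y4} as $y_{4,i}^2+2Y_iy_{3,i}$ to cancel the bilinear term and isolate $x_{3,i}$, then invert \eqref{y2} via the arcsine. Your additional remarks on branch selection (positivity of $E_{q,i}'$ and confinement of the load angle to $(-\pi/2,\pi/2)$) go slightly beyond what the paper states explicitly, but the core argument is identical.
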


\begin{proof} 
	From \eqref{y3} and \eqref{y4} it follows that
	$$
	y_{4,i}^2+2Y_iy_{3,i} = Y_i^2 (x^2_{3,i}-y_{1,i}^2).
	$$
	Thus, $x_{3,i}$ can be calculated as
	$$
	x_{3,i} = \sqrt{\frac{y_{4,i}^2 + 2Y_iy_{3,i}}{Y_i^2} + y_{1,i}^2}.
	$$
	With $x_{3,i}$ known, $x_{1,i}$ is obtained from \eqref{y2} as
	$$
	x_{1,i} = \arcsin\left(\frac{y_{2,i}}{Y_iy_{1,i}x_{3,i}}\right).
	$$	
\end{proof}

\begrem
\lab{rem2}
Proposition \ref{pro1} clearly reveals that we can obviate the use of dynamic state observers for the reconstruction of the generators load angle and the quadrature-axis internal voltage from the classical single axis flux-decay model---see Section~\ref{sec5} for a discussion on the potential extension to the more detailed two-axis model \cite{SAUetal}. We underscore the fact that, besides the PMU measurements, the only additional prior knowledge required in Proposition \ref{pro1} is the transient admittance  magnitude $Y_i$. The DSE problem in power systems is now widely recognized to be of major importance to enhance its awareness and security, see \cite{ZHAetal}, hence the interest of this result.  
\endrem

{
	\begrem
	\lab{rem2b}
	Alternatively to the proposed computation of $x_{3,i}$ from \eqref{y3} and \eqref{y4}, a reconstruction from \eqref{y2} and \eqref{y4} is also feasible. By rearranging and squaring of \eqref{y4} we get
	$$(Y_i^2x_{3,i}^2+Y_i^2V_{t,i}^2-y_{4,i}^2)^2 = (Y_i^2V_{t,i}x_{3,i}\cos(x_{1,i}))^2.$$
	Adding the left- and right-hand side to 
	$$(2Y_iy_{2,i})^2 = (2Y_i^2V_{t,i}x_{3,i}\sin(x_{1,i}))^2,$$
	which follows from \eqref{y2}, results after some algebraic manipulations in
	\begalis{
		0 =& (\overbrace{Y_i^2x_{3,i}^2}^{:=a})^2-2\overbrace{Y_i^2x_{3,i}^2}^{=a}(\overbrace{y_{4,i}^2+Y_i^2V_{t,i}^2}^{:=b}) \\&+ \underbrace{4Y_i^2y_{2,i}^2+(Y_i^2V_{t,i}^2-y_{4,i}^4)^2}_{:=c},} 
	yielding an equation of the form 
	$$a^2-2ab+c=0.$$
	Thus, the positive root $x_{3,i}>0$ can be calculated uniquely from
	$$x_{3,i} =\sqrt{\frac{y_{4,i}^2+Y_i^2V_{t,i}^2+2\sqrt{Y_i^2(y_{4,i}^2V_{t,i}^2-y_{2,i}^2)}}{Y_i^2}} ,$$
	since $a>0$ and $c>0$ by definition. 
	\endrem
}

\begrem
\lab{rem3}
In spite of its obvious simplicity the result of Proposition \ref{pro1} has not---to the best of our knowledge---been reported in the literature. A related reference is \cite{UECWED}, where it is shown that the single axis flux-decay model is a differentially flat system whose flat outputs are the network-frame currents. This property is used in  \cite{UECWED} for (open-loop) trajectory planning and feedback linearization of the system. 
\endrem

\section{A DREM-Based I\&I  Adaptive Observer for $x_{2,i}$}
\label{sec3}
%
In this section, we combine the well-established I\&I technique \cite{ASTetal} for observer design with the recently introduced DREM parameter estimator \cite{ARAetal} to design---using $x_{1,i}$ obtained via \eqref{obs}---an adaptive observer for the rotor angular speed $x_{2,i}$ of the $i$-th machine. For the sake of clarity we divide the presentation of the result in three parts: First, an I\&I observer assuming the mechanical parameters $a_{1,i}$, $a_{2,i}$ and $T_{m,i}$ are known. Second, we design a DREM estimator for these parameters. Third, with an {\em ad-hoc} application of the certainty equivalent principle, we propose the final DREM-based I\&I  adaptive observer, replacing the true parameters by their on-line estimates.
\subsection{An I\&I Observer with Known $a_{1,i}$, $a_{2,i}$ and $T_{m,i}$}
\label{subsec31}
%
\begin{lemma}\em
	\lab{lem1}
	Consider the mechanical subsystem dynamics given in \eqref{x1} and \eqref{x2} with $x_{1,i}$ obtained via \eqref{obs}. Define the I\&I observer for the $i$-th machine 
	\begalis{
		\dot{x}^I_{2,i} &= - ({a_{1,i}}+k) ({x}^I_{2,i} + k x_{1,i})+ k(\omega_{t,i}-\omega_s)\\ & \  \ \ + a_{2,i}(T_{m,i} - y_{2,i}), \\
		\hat{x}_{2,i} &= {x}^I_{2,i} + k x_{1,i},
	}   
	with $k>0$ a tuning parameter.  Then, 
	$$
	\tilde {x}_{2,i}(t)=e^{-(a_{1,i}+k)t}\tilde x_{2,i}(0),\;\forall t\geq 0,
	$$
	where $\tilde x_{2,i}:=\hat x_{2,i}-x_{2,i}$ is the state observation error.
\end{lemma}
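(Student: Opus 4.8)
The plan is to reduce the nonlinear observer to a scalar linear autonomous error equation and then integrate it. First I would differentiate the output map $\hat x_{2,i} = x^I_{2,i} + k x_{1,i}$ along the closed-loop trajectories, obtaining $\dot{\hat x}_{2,i} = \dot x^I_{2,i} + k\dot x_{1,i}$. Into this I would substitute the observer dynamics for $\dot x^I_{2,i}$ and the plant equation \eqref{x1} for $\dot x_{1,i}$. The crucial structural observation is that the bracketed quantity $x^I_{2,i} + k x_{1,i}$ appearing in the observer is precisely $\hat x_{2,i}$, so that term collapses to $-(a_{1,i}+k)\hat x_{2,i}$.

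Second, I would collect the remaining terms and verify the two cancellations that the observer is engineered to produce. The observer supplies the term $k(\omega_{t,i}-\omega_s)$, while $k\dot x_{1,i} = k(x_{2,i}-\omega_{t,i}+\omega_s)$ supplies $-k\omega_{t,i}+k\omega_s$; these cancel, eliminating the measured signal $\omega_{t,i}$ and leaving a $+k x_{2,i}$ contribution. At this stage one arrives at
\[
\dot{\hat x}_{2,i} = -(a_{1,i}+k)\hat x_{2,i} + a_{2,i}(T_{m,i}-y_{2,i}) + k x_{2,i}.
\]

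Third, I would subtract the true dynamics. Because $y_{2,i} = Y_i V_{t,i} x_{3,i}\sin(x_{1,i})$, equation \eqref{x2} reads $\dot x_{2,i} = -a_{1,i}x_{2,i} + a_{2,i}(T_{m,i}-y_{2,i})$, so the entire nonlinear, unknown-input term $a_{2,i}(T_{m,i}-y_{2,i})$ cancels in $\dot{\tilde x}_{2,i} = \dot{\hat x}_{2,i} - \dot x_{2,i}$. Gathering the surviving linear terms gives $\dot{\tilde x}_{2,i} = -(a_{1,i}+k)\hat x_{2,i} + (a_{1,i}+k)x_{2,i} = -(a_{1,i}+k)\tilde x_{2,i}$. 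Integrating this scalar linear ODE yields the claimed $\tilde x_{2,i}(t) = e^{-(a_{1,i}+k)t}\tilde x_{2,i}(0)$, with genuine exponential decay since $a_{1,i}=\omega_s D_i/(2H_i)>0$ and $k>0$.

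There is no real analytical obstacle here: the proof is a direct computation, and the result is an exact identity rather than an estimate. The only thing requiring care---and the reason the observer has its particular form---is the bookkeeping of the two engineered cancellations. The $+k x_{1,i}$ offset in the definition of $\hat x_{2,i}$ is precisely what converts the otherwise unusable $k\dot x_{1,i}$ contribution (which contains the unmeasured $x_{2,i}$ as well as $\omega_{t,i}$) into both the cancellation of $\omega_{t,i}$ and the $+k x_{2,i}$ term that combines with $-(a_{1,i}+k)\hat x_{2,i}$ to close the error dynamics into a pure function of $\tilde x_{2,i}$. I would therefore double-check signs at that step, since a single sign slip would break the exact cancellation and leave a residual driving term proportional to $x_{2,i}$ or $\omega_{t,i}$, which would spoil the clean exponential decay.
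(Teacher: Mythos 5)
Your proposal is correct and follows essentially the same route as the paper: a direct computation of the error dynamics in which the $k x_{1,i}$ offset produces the cancellation of $\omega_{t,i}$ and the $+k x_{2,i}$ term, and the $a_{2,i}(T_{m,i}-y_{2,i})$ terms cancel upon subtracting \eqref{x2}, leaving $\dot{\tilde x}_{2,i}=-(a_{1,i}+k)\tilde x_{2,i}$. The only cosmetic difference is that the paper presents the computation from the design perspective (keeping the proportional term $x_{2,i}^P(x_{1,i})$ general and then selecting $x_{2,i}^P=kx_{1,i}$), whereas you verify the given observer directly.
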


\begin{proof}
	Following the I\&I observer design technique \cite[Chapter 5]{ASTetal}, we propose to generate the estimate of $x_{2,i}$ as the sum of a proportional and an integral component, with the former being a function of the measurable signals, in this case of $x_{1,i}$. That is,
	$$
	\hat x_{2,i}=x_{2,i}^P(x_{1,i})+x_{2,i}^I,
	$$
	with $x_{2,i}^P(x_{1,i})$ a function to be defined. Computing the time derivative of the observation error $\tilde x_{2,i}$ we get 
	\begalis{
		\dot{\tilde x}_{2,i}&=\dot x_{2,i}^P(x_{1,i})+\dot x_{2,i}^I-\dot x_{2,i}\\
		&={dx_{2,i}^P(x_{1,i})\over dx_{1,i}}(x_{2,i}-\omega_{t,i} +\omega_s) +\dot x_{2,i}^I-\dot x_{2,i}\\
		&={dx_{2,i}^P(x_{1,i})\over dx_{1,i}}(\hat x_{2,i}-\tilde x_{2,i} -\omega_{t,i} +\omega_s)+\dot x_{2,i}^I-\dot x_{2,i}\\ 
		&=k(\hat x_{2,i}-\tilde x_{2,i}-\omega_{t,i} +\omega_s)+\dot x_{2,i}^I+a_{1,i}x_{2,i} \\& \ \  \ - a_{2,i}(T_{m,i} -y_{2,i})\\
		&=- ({a_{1,i}}+k) \tilde {x}_{2,i},
	}  
	where we have selected $x_{2,i}^P(x_{1,i})=kx_{1,i}$ in the fourth identity and used the definition of $\dot{x}^I_{2,i}$ to obtain the last one. Solving the last differential equation completes the proof.
\end{proof}

\begrem
\lab{rem4}
Notice that the design of the I\&I observer does not require the assumption that $T_{m,i}$ is constant, and it may be a time-varying {\em measurable} signal.
\endrem

\subsection{A Parameter Estimator for $a_{1,i}$, $a_{2,i}$ and $T_{m,i}$}
\label{subsec32}
%
The lemma below proposes a DREM-based estimator for the parameters $a_{1,i}$, $a_{2,i}$ and $T_{m,i}$ of the $i$-th machine that ensures convergence under some suitable excitation conditions. 

\begin{lemma}\em
	\lab{lem2}
	Consider the mechanical subsystem dynamics given in \eqref{x1} and \eqref{x2}  with $x_{1,i}$ obtained via \eqref{obs}. Define the vector of unknown parameters 
	\begequ
	\lab{the}
	\bm{\theta_i}:=\col(a_{1,i},a_{2,i},a_{2,i}T_{m,i}),
	\endequ 
	the signals     
	\begali{
		\nonumber
		z_i & := \frac{\lambda^2 p^2}{(p+\lambda)^2}[x_{1,i}] + \frac{\lambda^2 p}{(p+\lambda)^2}[\omega_{t,i}]\\
		\lab{zpsi}
		\bm{\psi_i}&:= \begmat{-\frac{\lambda^2 p}{(p+\lambda)^2}[x_{1,i}] - \frac{\lambda^2 }{(p+\lambda)^2}[\omega_{t,i}-\omega_s]\\ -\frac{\lambda^2}{(p+\lambda)^2}[y_{2,i}]\\ \frac{\lambda^2}{(p+\lambda)^2}[U_{-1}(t)]},
	}   
	with $p:={d \over dt}$, $\lambda>0$ a tuning parameter and $U_{-1}(t)$ a step signal, the vector $\bm{Z_i} \in \rea^3$ and the matrix $\bm{\Psi_i} \in \rea^{3 \times 3}$
	\begali{ \label{hlre}
		\nonumber
		\bm{Z_i} &:=\bm{\mathcal{H}}[z_i]\\
		\bm{\Psi_i} &:= \bm{\mathcal{H}}[\bm{\psi_i}^\top],
	}
	with $\bm{\mathcal{H}}$ a {\em linear, single-input 3-output, bounded-input bounded-output (BIBO)-stable operator} and the signals
	\begali{
		\nonumber
		\bm{\mathcal{Z}_i}&:=\adj\{\bm{\Psi_i}\} \bm{Z_i}\\
		\lab{calzdel}
		\Delta_i & :=\det\{\bm{\Psi_i}\},
	}
	where $\adj\{\cdot\}$ is the adjunct (also called ``adjugate") matrix and $\det\{\cdot\}$ is the determinant.
	
	The scalar parameter estimators 
	\begequ
	\lab{dothatthe}
	\dot {\hat \theta}_i^j=-\gamma_i^j \Delta_i(\Delta_i  \hat \theta_i^j -\mathcal{Z}_i^j),\;j=1,2,3,
	\endequ
	with $\gamma_i^j>0$ adaptation gains, ensures the parameter estimation error $\tilde \theta_i^j:=\hat \theta_i^j -\theta_i^j$ verifies 
	$$
	\liminf \tilde \theta_i^j(t)=0,\;j=1,2,3,
	$$
	provided $\Delta_i \notin \mathcal{L}_2$, that is,
	\begequ
	\lab{exccon}
	\liminf \int_0^t \Delta_i^2(s)ds=\infty.
	\endequ
\end{lemma}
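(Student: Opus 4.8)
The plan is to reduce the three-parameter identification problem to three \emph{decoupled scalar} linear regressions by following the standard DREM pipeline, and then to integrate the resulting scalar error equations in closed form. The only genuinely model-specific work lies in the very first step, namely exhibiting the linear regression $z_i=\bm{\psi_i}^\top\bm{\theta_i}$ that underlies the definitions in \eqref{zpsi}; everything downstream is the routine algebraic core of DREM.

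First I would build the regression. Differentiating \eqref{x1} once and recalling that $\omega_s$ is constant yields $\ddot x_{1,i}=\dot x_{2,i}-\dot\omega_{t,i}$, while \eqref{x1} also gives $x_{2,i}=\dot x_{1,i}+\omega_{t,i}-\omega_s$. Substituting both into \eqref{x2} and using $y_{2,i}=Y_iV_{t,i}x_{3,i}\sin(x_{1,i})$ from \eqref{y2} produces the identity
\begin{equation*}
\ddot x_{1,i}+\dot\omega_{t,i}=-a_{1,i}\big(\dot x_{1,i}+\omega_{t,i}-\omega_s\big)-a_{2,i}\,y_{2,i}+a_{2,i}T_{m,i},
\end{equation*}
which is linear in the unknowns $\bm{\theta_i}=\col(a_{1,i},a_{2,i},a_{2,i}T_{m,i})$ of \eqref{the} but involves the unmeasurable derivatives $\ddot x_{1,i}$, $\dot x_{1,i}$, $\dot\omega_{t,i}$. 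To render every term computable from the available signals I would apply the proper, BIBO-stable operator $\tfrac{\lambda^2}{(p+\lambda)^2}$ to both sides; by linearity and the constancy of $\bm{\theta_i}$ this is exactly the filtering encoded in $z_i$ and $\bm{\psi_i}$ in \eqref{zpsi}, the step $U_{-1}(t)$ replacing the constant coefficient of $a_{2,i}T_{m,i}$ so as to account for the signals living on $[0,\infty)$. The outcome is $z_i=\bm{\psi_i}^\top\bm{\theta_i}$, up to an exponentially decaying term arising from the filter and state initial conditions.

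Next I would carry out the mixing. Applying the linear operator $\bm{\mathcal{H}}$ to this scalar regression and invoking linearity together with the constancy of $\bm{\theta_i}$ gives, via \eqref{hlre}, $\bm{Z_i}=\bm{\Psi_i}\bm{\theta_i}$. Left-multiplying by $\adj\{\bm{\Psi_i}\}$ and using the algebraic identity $\adj\{\bm{\Psi_i}\}\bm{\Psi_i}=\det\{\bm{\Psi_i}\}\,I_3=\Delta_i I_3$ decouples the three parameters: componentwise, $\mathcal{Z}_i^j=\Delta_i\theta_i^j$ for $j=1,2,3$, with $\bm{\mathcal{Z}_i}$ and $\Delta_i$ as in \eqref{calzdel}. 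Substituting this into \eqref{dothatthe} collapses each scalar law to $\dot{\hat\theta}_i^j=-\gamma_i^j\Delta_i^2\tilde\theta_i^j$; since $\theta_i^j$ is constant, the error obeys the scalar linear time-varying equation $\dot{\tilde\theta}_i^j=-\gamma_i^j\Delta_i^2\tilde\theta_i^j$, whose explicit solution is $\tilde\theta_i^j(t)=\exp\!\big(-\gamma_i^j\int_0^t\Delta_i^2(s)\,ds\big)\tilde\theta_i^j(0)$. This tends to zero precisely when $\int_0^t\Delta_i^2(s)\,ds\to\infty$, i.e. under the non-square-integrability condition \eqref{exccon}, which is the claimed result.

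The main obstacle is the first step rather than the algebra of the mixing. Concretely, one must verify that filtering the unmeasurable-derivative identity through $\tfrac{\lambda^2}{(p+\lambda)^2}$ reproduces exactly the signals $z_i$ and $\bm{\psi_i}$ of \eqref{zpsi}, and then argue that the exponentially decaying transients induced by the nonzero filter/state initial conditions (and by the replacement of the constant by $U_{-1}(t)$) perturb the scalar error dynamics only by a vanishing additive term. Since $\Delta_i$ is bounded (because $\bm{\mathcal{H}}$ is BIBO-stable and its inputs are bounded), a standard vanishing-perturbation argument then preserves the closed-form exponential bound, and hence convergence under \eqref{exccon}. The remaining steps are the model-independent, textbook part of DREM.
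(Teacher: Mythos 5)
Your proposal is correct and follows essentially the same route as the paper's proof: eliminate $x_{2,i}$ from \eqref{x1}--\eqref{x2} to obtain the second-order identity, filter with $\tfrac{\lambda^2}{(p+\lambda)^2}$ to get the linear regression \eqref{lre}, mix with $\bm{\mathcal H}$ and $\adj\{\bm{\Psi_i}\}$ to decouple the scalars, and integrate the resulting error equation in closed form. Your additional remark about the exponentially decaying transients from filter initial conditions is a point of rigor the paper silently omits, but it does not change the argument.
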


\begin{proof}
	From \eqref{x1} and \eqref{x2} we get
	$$
	\ddot x_{1,i} +\dot{\omega}_{t,i}=-a_{1,i} (\dot x_{1,i} +\omega_{t,i}-\omega_s) + a_{2,i}(T_{m,i} - y_{2,i}).
	$$
	Applying the filter $\frac{\lambda^2}{(p+\lambda)^2}$ to the equation above we get 
	\begin{equation}
		\begin{split}
			\lab{lre0}
			&\frac{\lambda^2 p^2}{(p+\lambda)^2}[x_{1,i}] + \frac{\lambda^2 p}{(p+\lambda)^2}[\omega_{t,i}]=-a_{1,i}\frac{\lambda^2 p}{(p+\lambda)^2}[x_{1,i}] \\ & \ \ \ \ - a_{1,i}\frac{\lambda^2 }{(p+\lambda)^2}[\omega_{t,i}-\omega_s] - a_{2,i} \frac{\lambda^2}{(p+\lambda)^2}[y_{2,i}]\\ & \ \ \ \ +a_{2,i} T_{m,i}\frac{\lambda^2}{(p+\lambda)^2}[U_{-1}(t)].
		\end{split}
	\end{equation}   
	Using \eqref{the} and \eqref{zpsi} we can write \eqref{lre0} as a linear regression equation
	\begequ
	\lab{lre}
	z_i =  \bm{\psi_i}^\top  \bm{\theta_i}.
	\endequ
	Following the DREM procedure \cite{ARAetal,ORTetal} we carry out the next operations utilizing \eqref{lre} 
	\begalis{
		\bm{\mathcal{H}}[z_i] &=  \bm{\mathcal{H}}[\bm{\psi_i}^\top  \bm{\theta_i}]
		\qquad \qquad \quad\;(\Leftarrow\;\bm{\mathcal{H}}[ \cdot])\\
		\bm{Z_i} &= \bm{\Psi_i} \bm{\theta_i} \qquad \qquad \;\qquad(\Leftrightarrow\;\eqref{hlre})\\
		\adj\{\bm{\Psi_i}\} \bm{Z_i} &= \adj\{\bm{\Psi_i}\} \bm{\Psi_i \theta_i} \qquad \quad\; (\Leftarrow\;\adj\{\bm{\Psi_i}\} \times)\\
		\mathcal{Z}_i^j &= \Delta_i \theta_i^j,\;j=1,2 ,3\quad \quad\;(\Leftrightarrow\;\eqref{calzdel}).
	}
	where, to obtain the second identity, we have used the linearity of the operator $\bm{\mathcal{H}}$ and for the last identity the fact that for any (possibly singular) $q \times q$ matrix $\bm{M}$ we have $\adj\{\bm{M}\} \bm{M}=\det\{\bm{M}\}\bm{I_q}$. Replacing the latter equation in \eqref{dothatthe} yields the error dynamics
	$$
	\dot {\tilde \theta}_i^j=-\gamma_i^j \Delta_i^2  \tilde \theta_i^j,\;j=1,2,3.
	$$
	The proof is completed observing that the solutions of the later equations are given by
	$$
	\tilde \theta_i^j(t)=e^{-\gamma_i^j \int_0^t \Delta_i^2(\tau)d\tau}\tilde \theta_i^j(0),\;j=1,2,3.
	$$
\end{proof}

\begrem
\lab{rem5}
It is clear that it is possible to directly apply a classical gradient descent estimator to the {\em vector} linear regression equation \eqref{lre}, that is
\begequ
\lab{graest}
\bm{\dot {\hat {\theta}}_i}=-\bm{\Gamma_i \psi_i}(\bm{\psi_i}^\top   \bm{\hat \theta_i} -z_i),\;\bm{\Gamma_i}>0,
\endequ
which yields the error equation
\begequ
\lab{errequ}
\bm{\dot {\tilde {\theta}}_i}=-\bm{\Gamma_i \psi_i\psi_i}^\top   \bm{\tilde \theta_i}.
\endequ
Our motivation to use, instead, the more complicated DREM estimator is to relax the excitation assumptions that guarantee its convergence. Indeed, it is well-known  \cite[Theorem 2.5.1]{SASBOD} that a necessary and sufficient conditions for global (exponential) convergence of the error equation \eqref{errequ} is that the regressor $\bm{\psi_i}$ satisfies a stringent {\em persistent excitation} requirement  \cite[Equation 2.5.3]{SASBOD}. Some simulation results have shown that this condition is not satisfied in normal operation of the power system. On the other hand, it has been shown in \cite{ORTetal} that the non-square-integrability condition \eqref{exccon} is {\em strictly weaker} than persistent excitation. 
\endrem

\begrem
\lab{rem6}
We have presented Lemma \ref{lem2}  for the scenario where the mechanical power $T_{m,i}$ is constant but  {\em unknown}. It is clear that it is straightforward to extend it---applying the filter  $\frac{\lambda^2}{(p+\lambda)^2}$ and redefining $z_i$ and the regressor vector $\bm{\psi_i}$---to the case where $T_{m,i}$ is time-varying, but {\em measurable}. 
\endrem

\subsection{Adaptive I\&I Observer}
\lab{subsec33}
%
Combining the known parameter observer of Lemma \ref{lem1} with the parameter estimator of Lemma \ref{lem2} yields the final certainty equivalent adaptive observer
\begin{equation}
	\begin{split} 
		\dot{x}^I_{2,i} &= - ({\hat \theta_{1,i}}+k) ({x}^I_{2,i} + k x_{1,i}) +k(\omega_{t,i}-\omega_s) \\ & \ \ \ -\hat \theta_{2,i} y_{2,i} + \hat \theta_{3,i},\\
		\hat{x}_{2,i} &= {x}^I_{2,i} + k x_{1,i},
	\end{split} 
	\label{eq:obsx2}
\end{equation}
with $\bm{\hat \theta_i}$ defined via \eqref{zpsi}-\eqref{dothatthe}. Under the excitation assumption \eqref{exccon}, convergence of the adaptive observer is established via standard cascaded systems stability analysis, see {\em e.g.}, \cite{VID}. 
%
\section{Simulation Results}
\label{sec4}
In this section, we present simulation results demonstrating the effectiveness of the proposed methods. 
We use the well-known New England IEEE 39 bus system shown in Figure \ref{fig:IEEE39Bus}, with the parameters provided in \cite{hiskens13}. All synchronous generators are represented by the third-order flux-decay model \eqref{eq:full_sg} and are equipped with automatic voltage regulators (AVRs) and power system stabilizers (PSSs) according to \cite{hiskens13}. Thus, each generator is represented by the following 9-dimensional model:

\begin{equation}
	\label{eq:full_sg}
	\begin{split}
		\dot x_{1,i}&=x_{2,i}- \omega_{t,i} + \omega_s,\\
		\dot x_{2,i}&=-a_{1,i} x_{2,i} + a_{2,i}(T_{m,i} - Y_i V_{t,i} x_{3,i} \sin(x_{1,i})),\\
		\dot x_{3,i} &= -a_{3,i} x_{3,i}+ a_{4,i} V_{t,i} \cos(x_{1,i})+ \frac{E_{f,i}}{T_{d0,i}'},\\
		\dot{V}_{f,i} &= \frac{1}{T_{R,i}} (V_{t,i}-V_{f,i}),\\
		\dot{q}_{i} &= \frac{1}{T_{B,i}} \left(\left(1-\frac{T_{C,i}}{T_{B,i}}\right)(V_{\text{ref},i} - V_{f,i}+ V_{\text{pss},i})-q_i\right),\\
		\dot{E}_{f,i} &= \frac{1}{T_{A,i}} \left(K_{A,i}\left(q+\frac{T_{C,i}}{T_{B,i}}\right)(V_{\text{ref},i} - V_{f,i}+ V_{\text{pss},i})-E_{f,i}\right),\\
		\dot{p}_{1,i} &= -c_{1,i}p_{1,i} + p_{2,i} + (c_{4,i} -c_{1,i}c_{3,i})x_{2,i},\\
		\dot{p}_{2,i} &= -c_{2,i}p_{1,i} + p_{3,i} + (c_{5,i} -c_{2,i}c_{3,i})x_{2,i},\\
		\dot{p}_{3,i} &= -p_{1,i} -c_{1,i}c_{3,i}x_{2,i},\\
		V_{\text{pss},i} &=p_{1,i} + c_{3,1} x_{2,i}, 
	\end{split}
\end{equation}
with 
\begin{equation}
	\begin{split}
		c_{1,i} & = \frac{T_{4,i}T_{w,i}+T_{4,i}T_{2,i} + T_{2,i}T_{w,i} }{T_{w,i} T_{4,i} T_{2,i}} ,\\
		c_{2,i} & = \frac{T_{w,i} +T_{4,i} +T_{2,i}}{T_{w,i} T_{4,i} T_{2,i}},\\
		c_{3,i} & = \frac{K_{p,i} T_{1,i} T_{3,i}} {T_{2,i} T_{4,i}},\\
		c_{4,i} & = \frac{K_{p,i} (T_{1,i}+ T_{3,i})} {T_{2,i} T_{4,i}},\\
		c_{5,i} & =\frac{K_{p,i}}{T_{2,i} T_{4,i}},
	\end{split}
\end{equation}
where the differential equations for the AVR and PSS are taken from \cite[Figure 2,3]{hiskens13}. The signals $V_{f,i}$, $q_i$ and $p_{j,i}, j = 1,2,3,$ are intermediate variables required for the AVR and PSS respectively. All time constants and gains for the AVR and PSS are defined in \cite{hiskens13}. The employed parameters for the DREM-based I\&I adaptive observer for $x_{2,i}$ are given in Table \ref{tab:params}, where we defined the operator $\bm{\mathcal{H}}$ as a stable transfer matrix
\begin{equation}
	\bm{\mathcal{H}}(s) := \begin{bmatrix} 1 &e^{-sd_1} & \frac{s+k_1}{s+k_2}e^{-sd_2}\end{bmatrix}^\top.
\end{equation}
where $s$ is the Laplace variable. All simulations are performed using MATLAB.

To monitor the system, we assume a PMU installed at the terminal bus of generator 5. Albeit, monitoring any or all other generators in the system is feasible employing the proposed method and assuming additional PMUs installed at the terminal buses of the generators to be monitored. To validate the performance under realistic operation conditions three cases are considered: 
\begin{enumerate}
	\item A nominal, {\em i.e.} noisefree, case.
	\item A case with zero mean Gaussian noise added to the PMU measurements. The signal to noise ratio (SNR) is set to 45 dB, which was identified as a good approximation of noise power by analyzing real PMU data in \cite{Brown16}.
	\item A case with zero mean Laplacian noise added to the PMU measurements. The signal to noise ratio is set to 45 dB. In \cite{Wang18}, Laplacian noise was recommended to simulate realistic PMU measurement errors.
\end{enumerate}
To mimic realistic PMUs, the sampling frequency of the measurements is set to 60 [Hz] in all scenarios, which corresponds to the nominal system frequency and lies within the typical PMU sampling rate of 10 to 120 [Hz] \cite{zhou2015}.

\begin{table}
	\centering
	\begin{tabular}{l|l|l}
		\cmidrule{1-3}
		Symbol & Description& Value  \\  	\cmidrule{1-3}
		$k_1$ & Design parameter & 6\\
		$k_2$ & Design parameter & 4\\
		$d_1$ & Delay constant & 4\\
		$d_2$ & Delay constant & 1\\
		$\gamma_{1,2,3}$ &Adaptation gains (DREM) & $1.5\cdot10^{7}$ \\
		$\lambda$ & Filter parameter& 0.5\\
		$k$ & Observer gain& 1\\
		\cmidrule{1-3}
	\end{tabular}
	\caption{Parameters for the DREM-based I\&I  adaptive observer.}
	\label{tab:params}
\end{table} 

\begin{figure}
	\centering
	\includegraphics[width=1\linewidth]{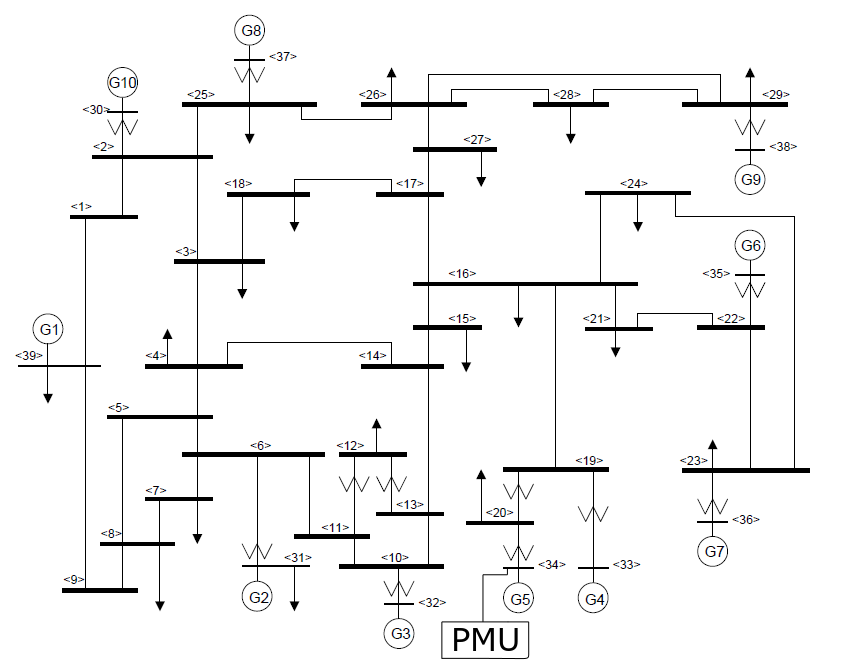}
	\caption{New England IEEE 39 bus system (figure taken from \cite{hiskens13}).}
	\label{fig:IEEE39Bus}
\end{figure}

The performance evaluation is undertaken as follows. For all three considered cases, we demonstrate that $x_{1,i}$ and $x_{3,i}$ can be reconstructed instantaneously using \eqref{obs} in Proposition \ref{pro1}. Due to the algebraic structure initial conditions are not required. Then, we show that already minor load variations, as continuously occurring during regular operation of the power system, provide sufficient excitation to estimate the unknown parameters following Lemma~\ref{lem2}. Thus, in combination with Lemma~\ref{lem1}, the second state $x_{2,i}$ can be reconstructed via the certainty equivalence adaptive observer \eqref{eq:obsx2}. In addition, we show that the proposed observers \eqref{obs} and \eqref{eq:obsx2} are capable of capturing the fast transient behavior of the system after a three-phase short circuit.
To quantify the performance of the state estimation, we compute the symmetric mean absolute percentage error (sMAPE) for each case by comparing the estimated and real values of the states $x_{1,5}$, $x_{2,5}$ and $x_{3,5}$. The sMAPE is defined as \cite{armstrong78}
\begin{equation}
	\text{sMAPE} = \frac{100 \%}{M} \sum_{k=1}^{M} \frac{|\hat{x}_{j,i}-x_{j,i}|}{(|\hat{x}_{j,i}|+|x_{j,i}|)/2} ,\;j=1,2,3,
\end{equation}
where $M$ is the number of considered data points.

\subsection{Scenario 1: Load Variations}
We simulated minor load variations in the system. The resulting frequency variations are within $60\pm0.05$~[Hz] and hence consistent with those during regular operation of transmission grids \cite{weissbach09}.
The proposed method is tested employing the three different cases concerning the disturbance of the PMU measurements. For computing the sMAPE of $x_{2,5}$ only the time with converged parameters, videlicet after 50~seconds, is considered. The results are shown in Table \ref{tab:wape_load}.
\begin{table}
	\centering
	\begin{tabular}{l|l|l|l}
		\cmidrule{1-4}
		Scenario 1&Case& State &sMAPE [\%] \\ 			\cmidrule{1-4}
		&  & $x_{1,5}$ & 0 \% \\
		&1 & $x_{2,5}$ & 0.03 \% \\	
		& & $x_{3,5}$ & 0 \% \\		\cmidrule{2-4}
		Load&  & $x_{1,5}$ & 0.13 \% \\
		variations	& 2 & $x_{2,5}$ &0.98 \% \\	
		&  & $x_{3,5}$ &  0.06 \% \\	\cmidrule{2-4}
		&  & $x_{1,5}$ & 0.11 \% \\
		& 3 & $x_{2,5}$ & 1.11\ \% \\	
		&  & $x_{3,5}$ & 0.05 \% \\	\cmidrule{1-4}
	\end{tabular}
	\caption{sMAPE of the state observer during load variations.}
	\label{tab:wape_load}
\end{table}
The simulation results of the algebraic observer introduced in Proposition \ref{pro1}, are shown in Figures \ref{fig:x_g5_load_no_noise}, \ref{fig:x_g5_load_gaussian} and \ref{fig:x_g5_load_laplace}. Since the reconstruction of the states $x_{1,i}$ and $x_{3,i}$ does not require the solution of a differential equation, initial conditions are not necessary. Hence, in the noisefree scenario (Case 1) those states can be reconstructed instantaneously with a sMAPE of 0~\%. In the disturbed scenarios (Case 2 and 3) a very small estimation error is observed. This is seen in a sMAPE of below 0.14 \% in the cases with Gaussian as well as Laplacian measurement noise.  

\begin{figure}
	\centering
	\includegraphics[width=1\linewidth]{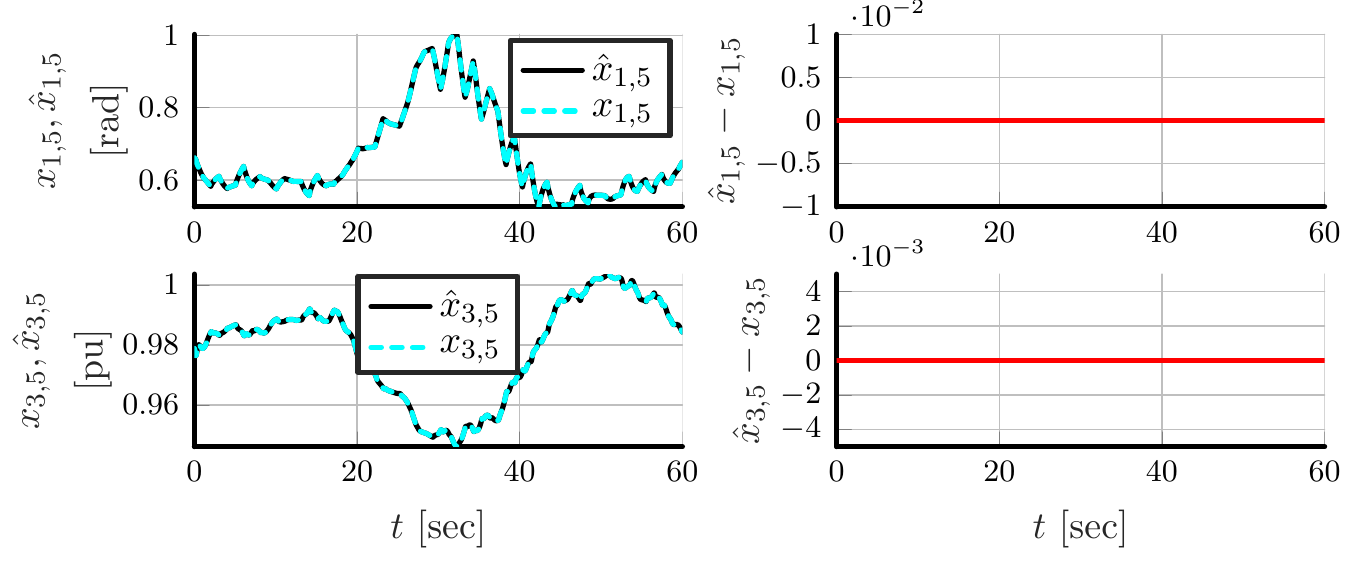}
	\caption{Scenario 1, Case 1: Algebraic state estimation for $x_{1,5}$ and $x_{3,5}$ of generator 5 in the presence of load variations with noisefree PMU measurements. Due to the algebraic structure of this reconstruction no initial conditions are required.}
	\label{fig:x_g5_load_no_noise}
\end{figure}
\begin{figure}
	\centering
	\includegraphics[width=1\linewidth]{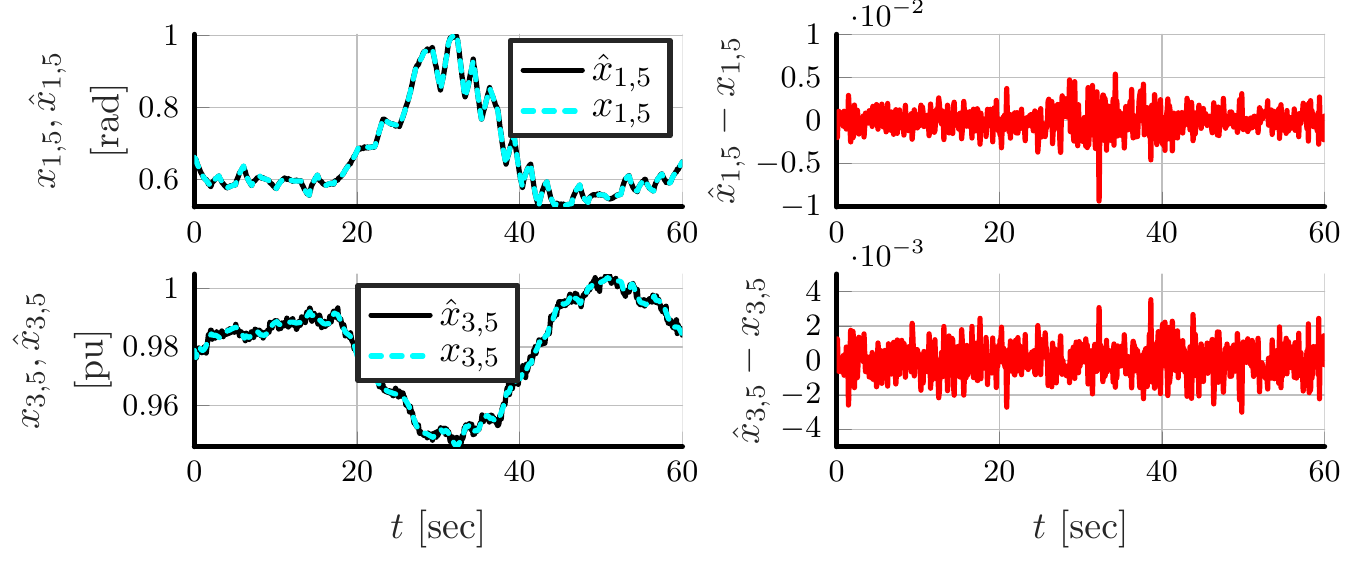}
	\caption{Scenario 1, Case 2: Algebraic state estimation for $x_{1,5}$ and $x_{3,5}$ of generator 5 in the presence of load variations. The PMU measurements are disturbed by zero mean Gaussian noise with a SNR of 45 dB. Due to the algebraic structure of this reconstruction no initial conditions are required.}
	\label{fig:x_g5_load_gaussian}
\end{figure}

\begin{figure}
	\centering
	\includegraphics[width=1\linewidth]{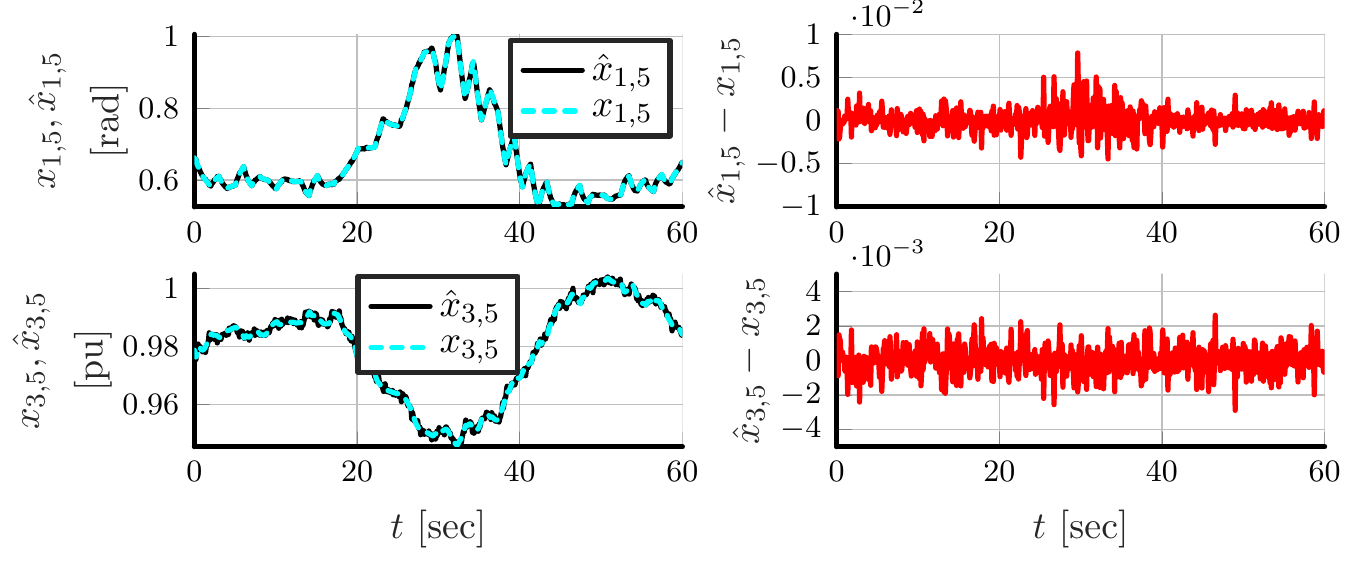}
	\caption{Scenario 1, Case 3: Algebraic state estimation for $x_{1,5}$ and $x_{3,5}$ of generator 5 in the presence of load variations. The PMU measurements are disturbed by zero mean Laplacian noise with a SNR of 45 dB. Due to the algebraic structure of this reconstruction no initial conditions are required.}
	\label{fig:x_g5_load_laplace}
\end{figure}

In Figures \ref{fig:params_g5_load_no_noise}, \ref{fig:params_g5_load_gaussian} and \ref{fig:params_g5_load_laplace}, the results of the DREM-based parameter estimation (Lemma \ref{lem2}) and the I\&I adaptive observer for $x_{2,i}$ (Lemma \ref{lem1}) are shown. The y-axis of these plots is limited to the most relevant range. It can be seen that the simulated load variations provide enough excitation for the DREM-based parameter estimation to converge towards the real values in all three cases. As stated in Lemma~\ref{lem1}, the I\&I adaptive observer for $x_{2,i}$ depends on the estimated parameters. Thus, in the noisefree case the observer error $\hat x_{2,i}-x_{2,i}$ only converges towards zero after the parameter estimates $\bm{\hat\theta_i}$ of $\bm{\theta_i}$ in \eqref{the} have converged at $t=50$~secs. In the cases with noisy PMU measurements, a very small state estimation error is observed even after $t=50$~secs. However, the sMAPE of $x_{2,i}$ is below 1.12~\% for the case with Gaussian as well as Laplacian measurement noise, which are very good results. 

\begin{figure}
	\centering
	\includegraphics[width=1\linewidth]{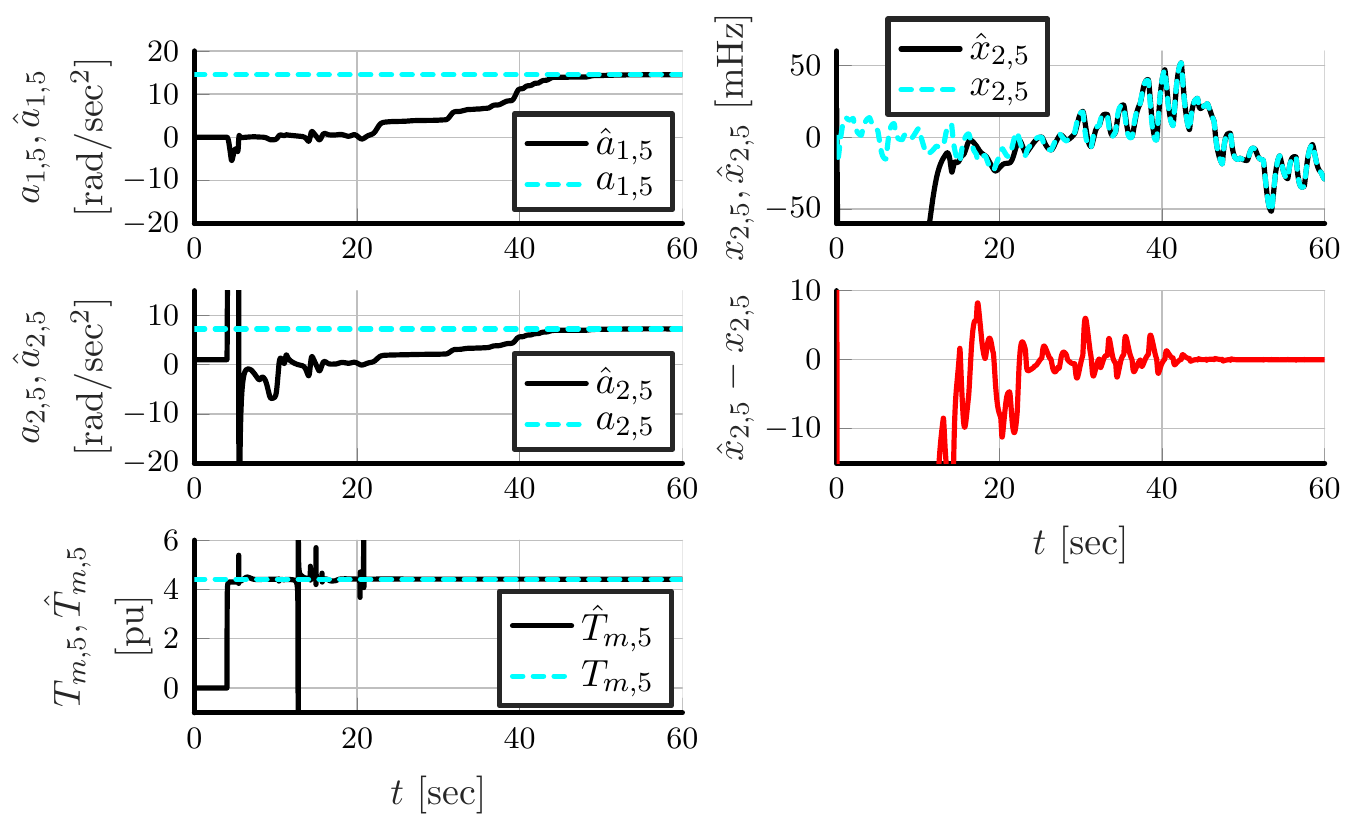}
	\caption{Scenario 1, Case 1: Simulation results of the DREM-based parameter estimation and the I\&I adaptive observer for $x_{2,5}$ of generator 5 in presence of load variations with noisefree PMU measurements.}
	\label{fig:params_g5_load_no_noise}
\end{figure}

\begin{figure}
	\centering
	\includegraphics[width=1\linewidth]{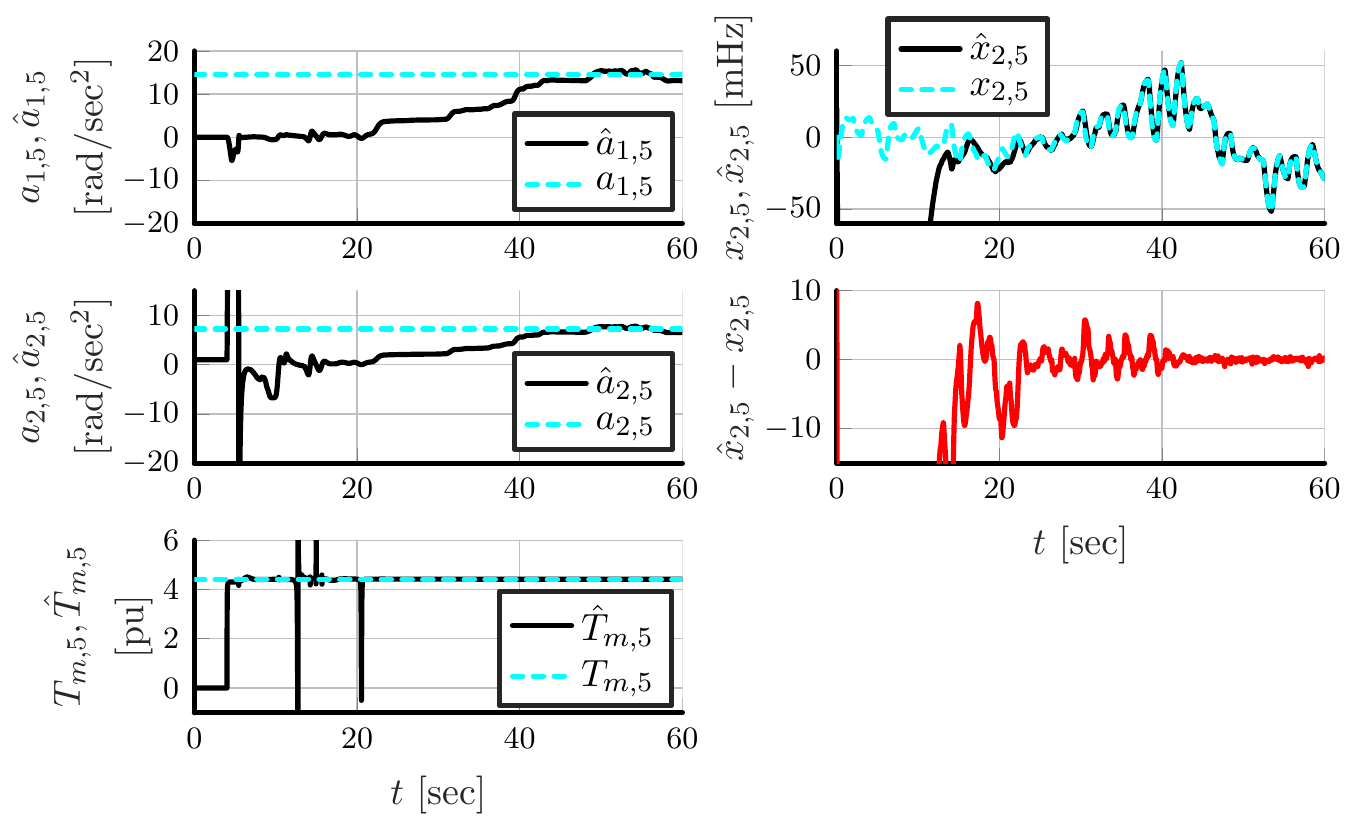}
	\caption{Scenario 1, Case 2: Simulation results of the DREM-based parameter estimation and the I\&I adaptive observer for $x_{2,5}$ of generator 5 in presence of load variations. The PMU measurements are disturbed by zero mean Gaussian noise with a SNR of 45 dB.}
	\label{fig:params_g5_load_gaussian}
\end{figure}

\begin{figure}
	\centering
	\includegraphics[width=1\linewidth]{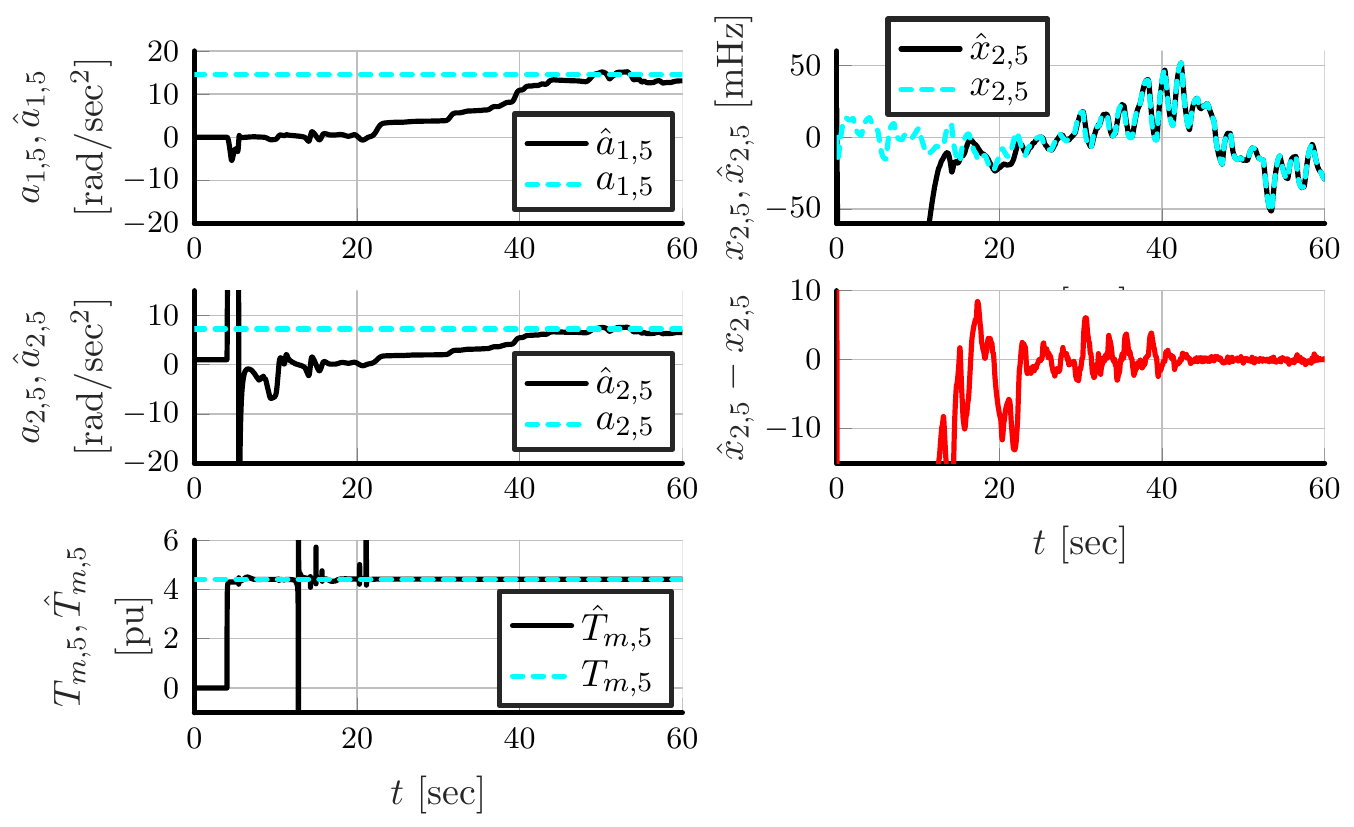}
	\caption{Scenario 1, Case 3: Simulation results of the DREM-based parameter estimation and the I\&I adaptive observer for $x_{2,5}$ of generator 5 in presence of load variations. The PMU measurements are disturbed by zero mean Laplacian noise with a SNR of 45 dB.}
	\label{fig:params_g5_load_laplace}
\end{figure}

\subsection{Scenario 2: Three-Phase Short Circuit}
In this scenario, following \cite{hiskens13} we simulated a three-phase short circuit at Bus 16 occurring at $t = 2$ secs and cleared at $t=2.2$ secs. It is assumed that the DREM-based parameter estimation has already converged during regular operation before the fault. Thus, the parameters are assumed known in this scenario. The performance of the state estimation is quantified by computing the sMAPE. The results are shown in Table \ref{tab:wape_load}. To avoid a division by zero in the calculation of the sMAPE of $x_{2,5}$, only the transient period between $t \approx 2$ secs and $t \approx 3.5$ secs is considered. 

The results of the algebraic and dynamic state estimation are shown in Figures \ref{fig:x_g5_sc_no_noise}, \ref{fig:x_g5_sc_gaussian} and \ref{fig:x_g5_sc_laplace} for the monitored generator 5. As is to be expected, the algebraic observer \eqref{obs} exhibits no estimation error in the noiseless case. In the cases 2 and 3, the states $x_{1,5}$ and $x_{3,5}$ are reconstructed with very high accuracy and a sMAPE of below 0.13~\%, which is in the same range as observed in the simulations of load variations. Furthermore, the observer \eqref{eq:obsx2} for the frequency $x_{2,i}$ also performs satisfactorily, with only a minor estimation error shortly after the fault in all three cases. The computed sMAPE is below 9~\%, which is a good result considering the large and rapid transient deviation after the three-phase short circuit.

\begin{table}
	\centering
	\begin{tabular}{l|l|l|l}
		\cmidrule{1-4}
		Scenario 2	&Case& State &sMAPE [\%] \\ 			\cmidrule{1-4}
		&  & $x_{1,5}$ & 0 \% \\
		&1 & $x_{2,5}$ &  4.92 \% \\	
		& & $x_{3,5}$ & 0 \% \\		\cmidrule{2-4}
		Short&  & $x_{1,5}$ &  0.12 \% \\
		circuit	& 2 & $x_{2,5}$ &  8.58 \% \\	
		&  & $x_{3,5}$ &   0.06 \% \\	\cmidrule{2-4}
		&  & $x_{1,5}$ &  0.1 \% \\
		& 3 & $x_{2,5}$ &  8.94 \% \\	
		&  & $x_{3,5}$ &  0.05 \% \\	\cmidrule{1-4}
	\end{tabular}
	\caption{sMAPE of the state observer during a three-phase short circuit.}
	\label{tab:wape_sc}
\end{table}
\begin{figure}
	\centering
	\includegraphics[width=1\linewidth]{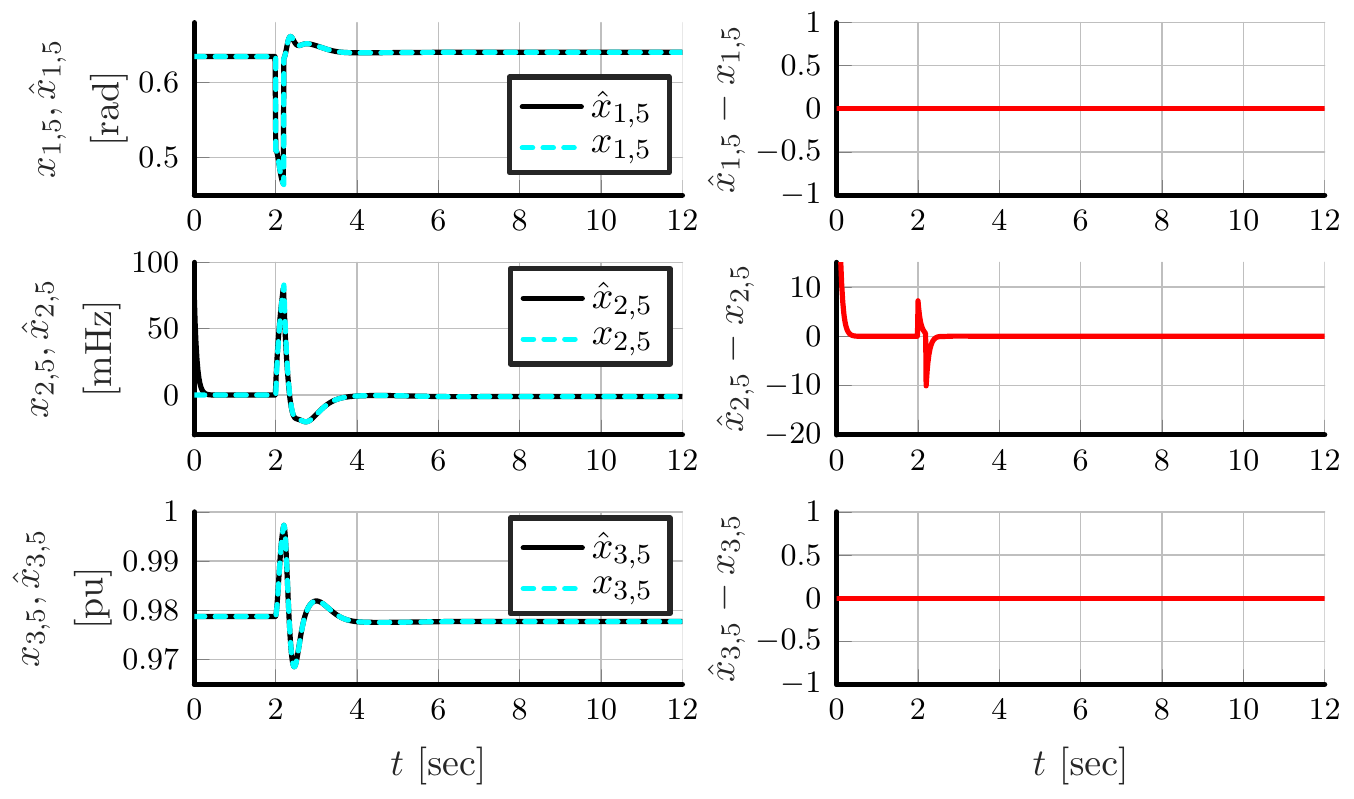}
	\caption{Scenario 2, Case 1: State estimation results for generator 5 during a three-phase short circuit at Bus 16 with noisefree PMU measurements. For the algebraic reconstruction of $x_{1,5}$ and $x_{3,5}$ no initial conditions are required.}
	\label{fig:x_g5_sc_no_noise}
\end{figure}

\begin{figure}
	\centering
	\includegraphics[width=1\linewidth]{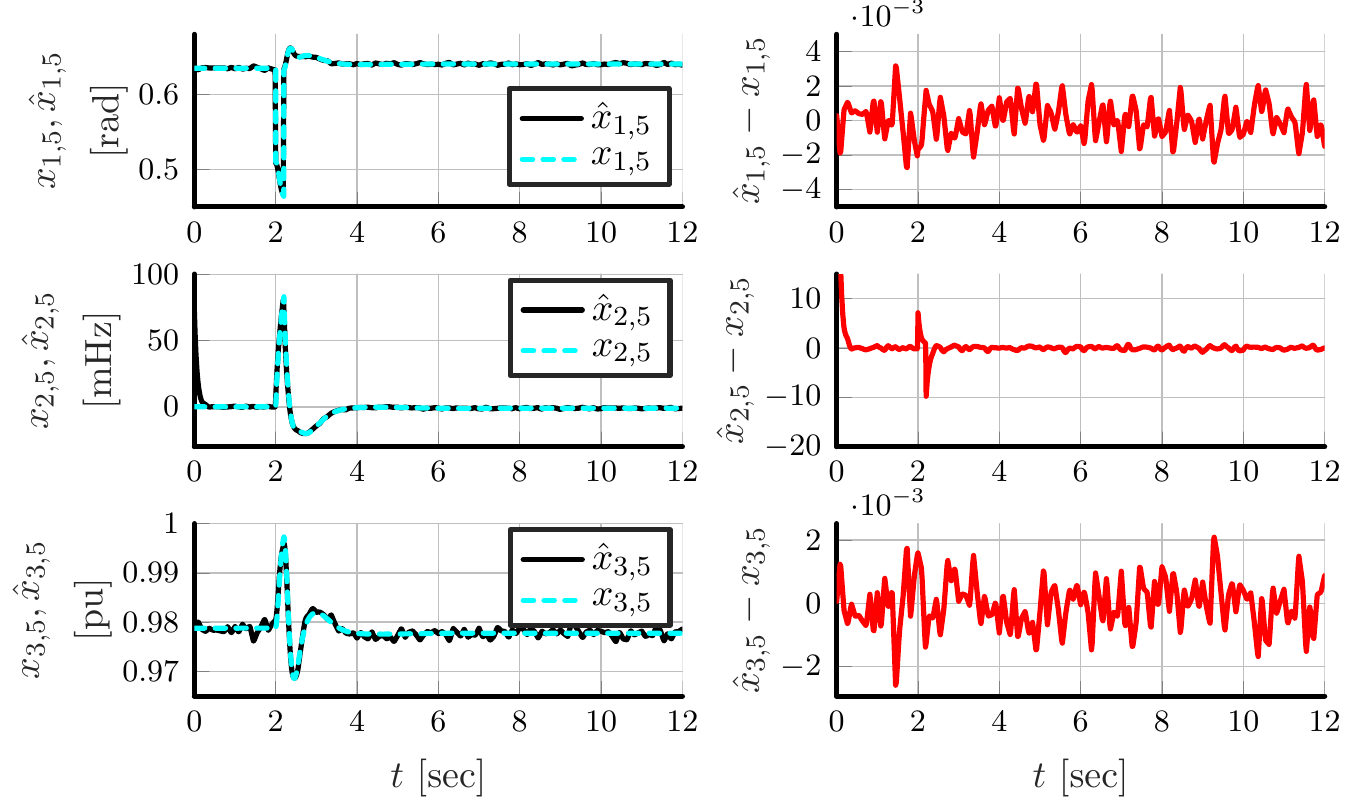}
	\caption{Scenario 2, Case 2: State estimation results for generator 5 during a three-phase short circuit at Bus 16. The PMU measurements are disturbed by zero mean Gaussian noise with a SNR of 45 dB. For the algebraic reconstruction of $x_{1,5}$ and $x_{3,5}$ no initial conditions are required.}
	\label{fig:x_g5_sc_gaussian}
\end{figure}

\begin{figure}
	\centering
	\includegraphics[width=1\linewidth]{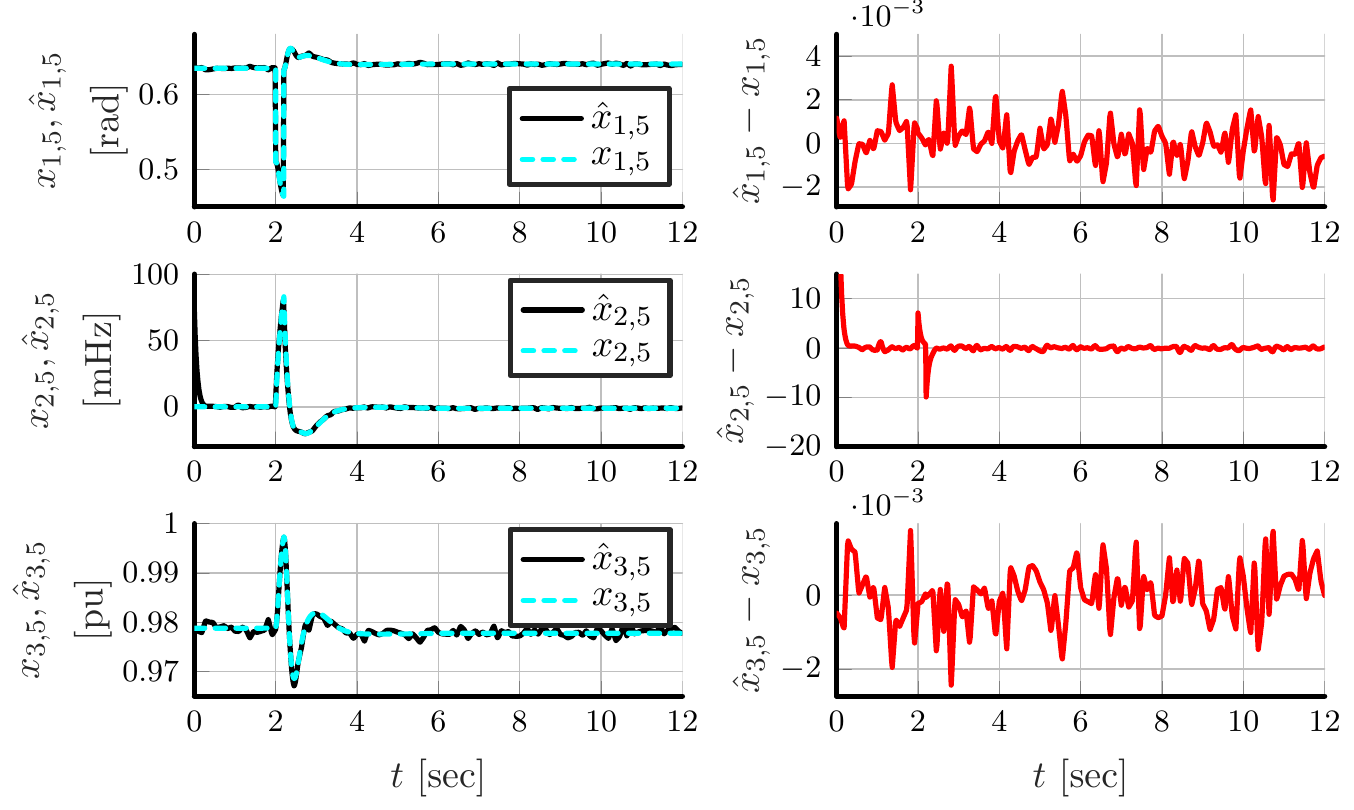}
	\caption{Scenario 2, Case 3: State estimation results for generator 5 during a three-phase short circuit at Bus 16. The PMU measurements are disturbed by zero mean Laplace noise with a SNR of 45 dB. For the algebraic reconstruction of $x_{1,5}$ and $x_{3,5}$ no initial conditions are required.}
	\label{fig:x_g5_sc_laplace}
\end{figure}

\section{Conclusions and Future Research}
\label{sec5}
A decentralized mixed algebraic and dynamic state observer was presented for DSE in multi-machine power systems. It was shown that the load angle and the quadrature-axis internal voltage can be reconstructed algebraically from available PMU measurements at the terminal bus of a synchronous generator. For observing the relative shaft speed a DREM-based I\&I adaptive observer was proposed. 

In simulation studies using the New England IEEE 39 bus system the effectiveness of the proposed observer was demonstrated, using realistic PMU measurements sampled at 60 [Hz] and disturbed by Gaussian and Laplacian noise. In particular, the convergence of the DREM-based parameter estimation was shown under regular operation conditions and load variations. Moreover, by simulating a three-phase short circuit the ability of the method to monitor the state evolution during fast transients was demonstrated.

The classical flux-decay model \eqref{model} provides a fairly accurate description of the behavior of a synchronous machine to assess transient stability in a multi-machine scenario. However, Assumption \ref{assX} can be restrictive in some scenarios and it has recently been argued that its precision can be improved by including additional dynamic effects. For instance, it is argued in \cite[Chapter 11]{MACetal} that including a second differential equation to account for rotor body effects in the $q$-axis significantly improves the accuracy of the model. This leads to a fourth-order model. Our current research is aimed at extending Proposition \ref{pro1}, {\em i.e.}, the reconstruction via algebraic operations of (some of) the system's state variables from the PMU measurements, for the third-order model without relying on Assumption \ref{assX} and for the fourth-order model. Unfortunately, it is possible to show that for the fourth-order model the (relevant part) of the mapping $y \mapsto x$ does not satisfy the rank conditions of the Implicit Function Theorem, suggesting that its required injectivity condition is not satisfied. The (possibly negative) results of this research will be reported in the near future.

\bibliographystyle{IEEEtran}

\end{document}